\documentclass[letterpaper]{article}
\usepackage{graphicx} 
\usepackage{biblatex}
\addbibresource{references.bib}

\usepackage{amsmath}
\usepackage{amsthm}
\newtheorem{proposition}{Proposition}

\title{An additively optimal interpreter for approximating Kolmogorov prefix complexity}
\author{%
Zoe Leyva-Acosta,\\%
Eduardo Acuña Yeomans and \\%
Francisco Hernandez-Quiroz%
}

\begin{document}

\maketitle

\begin{abstract}
    We study practical approximations to Kolmogorov prefix complexity~($K$) using IMP2, a high-level programming language. Our focus is on investigating the interpreter optimality for this language as the reference machine for the Coding Theorem Method~(CTM). A method advanced to deal with applications to algorithmic complexity different to the popular traditional lossless compression approach based on the principles of algorithmic probability. The chosen model of computation is proven to be suitable for this task and a comparison to other models and methods is performed. Our findings show that CTM approximations using our model do not always correlate with results from lower-level models of computation. This suggests some models may require a larger program space to converge to Levin's universal distribution. Furthermore, we compare CTM with an upper bound to Kolmogorov complexity and find a strong correlation, supporting CTM's validity as an approximation method with finer-grade resolution of~$K$.
\end{abstract}

\section{Introduction}

The algorithmic Kolmogorov complexity of a string is the length of a smallest program producing such string when executed by a reference machine~\cite{KolmogorovOriginal}. Apart from being an uncomputable measure, another inconvenience about algorithmic complexity is that in practice the choice of reference model or description method plays a role in approximating calculations. In that sense, the Invariance theorem~\cite{IKCA} tells us about the existence of some \emph{optimal} universal machines such that, when chosen as reference model, produce an invariant measure, up to an additive constant. This constant can skew Kolmogorov complexity's values of short strings, as the invariance is, even in the optimal theoretical case, of an asymptotic nature.

An approach to algorithmic complexity introduced before~\cite{BDMIntro,GenerativeModels} is based upon the so called \emph{Coding Theorem Method} (CTM)~\cite{CTMShortStrings,Soler-Toscano2017-SOLACM,soler2014calculating,BDMIntro}, which relies on a relation between the algorithmic prefix complexity of a string and its algorithmic probability~\cite{IKCA,Levin1974}. The method approximates the value of a string's algorithmic complexity via the frequency distribution of that string as an output of small Turing machines and other models. Zenil and Delahaye~\cite{Zenil_OnTheAlg} pioneered this empirical approach to algorithmic complexity based on output frequency distributions across different computational models (e.g. Turing machines, cellular automata, and Post systems) and with empirical distributions from real-world/physical data. 

A criticism of CTM~\cite{Vitanyi_How_Incomp} is its potential dependency on the reference machine, which may affect the stability of the method in the face of changes in the  reference model. The other only alternative based on popular lossless compression does not address this problem either and is more related to classical Information Theory (Shannon Entropy) than to Kolmogorov complexity~\cite{ZenilRespuesta}. The findings of Zenil et~al.~\cite{Zenil_OnTheAlg} revealed a tendency towards simplicity occurring with greater frequency and with some alignment across models. As CTM depends on the outputs from the execution of a subset of programs for a particular model of computation, changing the subset of programs or the reference model could lead to inconsistent results, even in compliance with the theory. However, the theory guarantees convergence and in this work we explore how slow or fast a higher-level programming language may converge.

The authors of CTM have reported empirical evidence in favour of the stability of complexity approximations in the face of changes in the computational formalism upon application of CTM~\cite{CTLB2019,ZenilRespuesta,Zenil_OnTheAlg,Zenil_CompressionValidation}, further suggesting a convergence towards some early natural distribution based on the complexity of the underlying model of computation~\cite{DelahayeTowardsStableDefinition} even without the guarantee of optimality. The conjecture in~\cite{DelahayeTowardsStableDefinition} can be summarised as implying that the simpler (shorter) a model of computation the faster it may converge. The early reported correspondence between high algorithmic probability strings from output frequency distributions has been suggested as a defining characteristic of `natural' computational models~\cite{DelahayeTowardsStableDefinition}, specifically those aligning with the distributions observed in Turing machines and cellular automata. This led to the conjecture that simpler models converge faster than more complex or artificial ones.

In this paper we introduce a computational model that aspires to meet both desirable theoretical constraints and practical applicability. To begin with, the model is optimal. Additionally it is based on a high level programming language which, hopefully, can offer shorter versions of program routines performing abstract tasks. The optimal universal interpreter for our language, called IMP2, is constructed in such a way that a well-defined Kolmogorov prefix complexity is specified. This entails both a suitable choice for a binary prefix-free encoding of programs and an efficient enumeration of programs. Moreover, given that our proposed model is conformant with the theory of algorithmic prefix complexity, it can be used for an independent empirical test of the previously suggested hypothesis regarding the convergence towards a `natural' distribution upon application of CTM.

The computational methods implemented allow us to generate all IMP2 programs up to a given program length, as well as executing each program in a resource-bounded evaluator in order to obtain the output strings. Our framework allows to produce approximations to Kolmogorov complexity of a relatively small set of strings, using both CTM and a more direct approximation based on the length of the smallest program found in the program space, which we call SPF. This was done with very modest computational means just to get a taste of what can be done with the language before embarking on larger scale calculations. 

Our results show that the degree of correlation between the estimated CTM complexity under IMP2 and the previous estimation values published in the literature~\cite{CTMShortStrings,soler2014calculating} vary widely and significantly depending on the scale of the data being considered. The differences we found in complexity estimations under IMP2 and other models of computation indicate that some models might not conform to the algorithmic `natural' behaviour observed previously~\cite{CTLB2019,ZenilRespuesta,Zenil_OnTheAlg}. This further suggests that the CTM methodology is in fact sensible to changes in the reference model chosen even if such a model is not biased towards being not optimal or reasonable. However, surprisingly, we also found (a)~a correlation between the approximations via CTM and the length of the smallest program found (SPF) for each string under our own reference machine and (b)~that SPF is a coarse-grained measurement of~CTM. These two facts further support the claims of the validity of CTM as an approach to algorithmic complexity presented in previous reports~\cite{Zenil_Correspondence}, even if the empirical approximations produced are not exempted from the theoretical drawbacks arising from dependence on the chosen model.

\section{Preliminaries}

{\it Algorithmic complexity}~\cite{KolmogorovOriginal}, also known as \emph{Kolmogorov complexity}, of a string of bits $x$ with respect to a universal Turing machine $U$ is defined as:
\[ K_U(x) = \min \left\{ |p| {}:{} U(p) = x \right\}, \] 
where $|p|$ denotes the length in bits of the program $p$, and $U(p) = x$ means that the universal Turing machine $U$, given input $p$, produces output $x$ upon completing its execution. On the other hand, \emph{prefix complexity} is a variant of plain Kolmogorov complexity as defined above where we require the reference Turing machine $U$ to be \emph{prefix-free}, that is, the set of programs it can execute forms a prefix-free set (meaning no program is a prefix of another).

\emph{A machine $U$ is additively optimal}~\cite{IKCA} for a class of machines (say the class of Turing machines or prefix machines), if for any other machine $U'$ of the same class, there is a constant $c$ (depending on $U$ and $U'$ but not on $x$) such that for all $x$:
\begin{equation}\label{eq:kolmogorov-def}
    K_{U}(x)  \leq K_{U'}(x) + c
\end{equation}
Moreover, the {Invariance theorem} guarantees the existence of additively optimal machines for both Turing machines and prefix machine classes. In what follows, for simplicity, we will refer to additively optimal machines as being optimal.

The \emph{algorithmic probability} (also known as the universal distribution)~\cite{ZvonkinLevin1970} of a string $x$ with respect to a Turing machine $U$ is defined as:
\[ m_U(x) = \sum_{U(p)=x} 2^{-|p|} \]

CTM~\cite{CTMShortStrings} intends to produce an empirical approximation of algorithmic probability by obtaining the output strings from the execution of all Turing machines with $n$ states and $m$ symbols, i.e. the set $(n,m)$. Then, the relative frequency, $D(n,m)(x)$, of the output string $x$, with respect to the total number of halting machines in $(n,m)$, is taken as an approximation to its algorithmic probability:
\begin{equation}\label{eq:freq}
D(n,m)(x) = \frac{\vert \left\{ p\in[1, \vert(n,m)\vert] : T_{p}(\epsilon) = x \right\} \vert}{\vert \left\{ p\in[1, \vert(n,m)\vert] : T_{p} \textrm{ halts} \right\} \vert},
\end{equation}
where $T_{p}$ denotes the Turing machine with number $p$ in the set $(n,m)$, according to the enumeration in~\cite{CTMShortStrings}. 

The relationship between algorithmic probability and prefix Kolmogorov complexity is established by the \emph{Coding Theorem}~\cite{IKCA}, which states that there is a constant $c$, such that for every string $x$,
\[ |-\log_2 m_U(x) - K_U(x)| < c. \]

Based on this relation, given that the constant bound is asymptotically non-significant, CTM approximates the complexity of a string $x$ by 
\begin{equation}\label{eq:ctm}
CTM_{(n,m)}(x)=-\log_2 D_{(n,m)}(x).
\end{equation}

\section{Methodology and techniques}

One of the main goals in our study was to obtain a reference computational model meeting some desirable constraints, in particular, being optimal and prefix free. We call this reference machine IMP2, as it is the result of mayor modifications to IMP, a very well known small imperative language used in~\cite{LEMUS202231}, with the purpose of achieving provable prefix-free optimality.

In order to adapt the CTM to our IMP2, we enumerate its programs by increasing length and execute them on a resource-bounded interpreter. When the number of execution steps exceeds a given threshold~\cite{LEMUS202231} the program is considered non-halting. When a program halts, an output string is computed and the relative frequency of the output is adjusted accordingly. Frequency values for each output string are then used to obtain the CTM approximation to algorithmic complexity under the IMP2 model. 

Additionally, the length of the first program producing a string is also registered as its SPF (\emph{Smallest Program Found}) complexity approximation. Of course, the insolvability of the halting problem gets in the way of considering SPF's length as the Kolmogorov complexity of the string under IMP2.

\subsection{From IMP to IMP2}

In~\cite{LEMUS202231}, the authors present a proof of concept of an approach to approximate plain Kolmogorov complexity using SPF, based on a reference machine called IMP. Although this is a reasonable computational model based on an imperative high-level programming language, it's not straightforward to prove weather or not this model is optimal. The same is also true for other similar attempts to approximate Kolmogorov prefix complexity~\cite{CTMShortStrings,soler2014calculating}, in which the approach is based on a reference universal Turing machine with a reasonable or `natural' design whose optimality is also merely conjectured.

As for the prefix-free property, since the Coding Theorem is central to CTM, which is one of the approximation methods that concerns us in this study, we require our chosen reference machine to be prefix-free. However, we can hardly see IMP, as it is used in the approach presented in~\cite{LEMUS202231}, as a suitable reference model for a well-specified prefix Kolmogorov complexity. This is mainly because the approach to `program length' used in~\cite{LEMUS202231} is associated with the number of nodes in the abstract syntax tree of the corresponding sentence of the high-level (programming) language grammar. Even though a prefix free encoding could be devised for the IMP language programs, there is no well-defined binary encoding of `programs' in the approach presented in~\cite{LEMUS202231}, impeding  a direct interpretation of program length as the number of bits in its binary representation.  The presence of such binary encoding for the input programs of the reference machine is crucial for this length interpretation, as well as for satisfying the prefix-free property, a fundamental requirement for a well-specified Kolmogorov prefix complexity. The specification of such encoding for the input programs is precisely one of the most significant differences between our current model proposal, IMP2, and the one presented in~\cite{LEMUS202231}.

The development of the IMP2 model was predominantly driven by the aim of creating a reference machine suitable for approximating Kolmogorov prefix complexity while adhering to the fundamental design principles of the original IMP framework. This led us to the creation of a fundamentally distinct model suitable for proving optimality and that preserves the syntactic structure, output conventions, and semantic properties of the IMP language.

\subsection{Syntax of IMP2}
\label{sec:imp2lang}

IMP2 is a small high level imperative programming language. The language's syntax is specified by the context-free grammar in Figure~\ref{fig:grammar}.

A {\it well specified} input program consists of a two-part code $\langle n, y \rangle$ defined as
\begin{equation}\label{eq:twopartcode}
    \langle n, y \rangle = \bar{n}\cdot y
\end{equation}
where $n$ specifies a particular sentence $P_n$ of IMP2 by its index $n$ in an enumeration of all valid sentences of the language (see Section~\ref{sec:enum}) and $\bar{n}$ denotes its self-delimiting binary encoding. For the latter, consider the string $\mathcal{B}_n$ according to the length-increasing lexicographic enumeration of all binary strings:
\[ (\epsilon, 0), (\mathtt{0}, 1), (\mathtt{1}, 2), (\mathtt{00}, 3), 
(\mathtt{01}, 4), (\mathtt{10}, 5), (\mathtt{11}, 6), \dots, (\mathcal{B}_n, n), \dots\]
Then $\bar{n} = 1^{|\mathcal{B}_n|} \cdot 0 \cdot \mathcal{B}_n$ (the prefix $1^{|\mathcal{B}_n|} \cdot 0$ is for making it self-delimiting). $y$ is a binary input stream.
\begin{figure}
    \begin{align*}
      P &\rightarrow \texttt{skip} \mid \texttt{x[} N \texttt{] := } A  \mid \texttt{(while } B \texttt{ do } P\texttt{)} \mid \texttt{(} P \texttt{ ; } P\texttt{)}  \mid \texttt{(if } B \texttt{ then } P \texttt{ else } P\texttt{)} \\
      B &\rightarrow \texttt{true} \mid \texttt{false} \mid \texttt{(}A \texttt{ = } A\texttt{)} \mid \texttt{(}A \texttt{ < } A\texttt{)} \mid \texttt{(}B \texttt{ and } B\texttt{)} \mid \texttt{(}B \texttt{ or } B\texttt{)} \mid \texttt{not } B \\
      A &\rightarrow \texttt{readbit} \mid N \mid \texttt{x[} N \texttt{]} \mid \texttt{(}A \texttt{ + } A\texttt{)} \mid \texttt{(}A \texttt{ - } A\texttt{)} \mid \texttt{(}A \texttt{ * } A\texttt{)}
    \end{align*}
    \caption{Context-free grammar for IMP2. $N$ stands for natural numbers without leading zeros.}\label{fig:grammar}
\end{figure}

\subsection{Semantics of IMP2}
The semantics is based on an array of memory locations and a potentially empty stream of input bits. The execution of an IMP2 program $\langle n, y \rangle$ starts by scanning and parsing the $n$-part of the input program. A second step is identifying the $n$-th IMP2 sentence $P_n$, according to the enumeration presented in Section~\ref{sec:enum}. Then, the machine sets all memory locations to $0$ and prepares to read the input stream $y$. Then the interpreter executes $P_n$ with $y$ in the input tape. Commands have the usual meaning in standard high-level programming languages.

The special arithmetic operation \texttt{readbit} allows programs to read a single bit from the input stream, starting from the first bit of the string $y$, this being the only way of scanning the input. The next \texttt{readbit} operation will read the following bit, meaning bits are read in order and cannot be read twice. If the program reads from an empty input stream it loops indefinitely and thus is considered a non-halting program. An assignment of \texttt{readbit} to a memory location stores the bit read in that location. 

Memory locations are indexed by non-negative integers and contain arbitrarily large non-negative integers. Figures~\ref{fig:imp2exA} and \ref{fig:imp2exB} are examples of IMP2 sentences.

A statistical algorithm, based on sampling the program space to estimate a halting threshold~\cite{CALUDE2020}, was incorporated to the IMP2 interpreter in order to ensure termination and mitigate the effects of the halting problem~\cite{LEMUS202231}.

IMP2 has all necessary features to compute any partial computable function and it is therefore Turing complete.

\begin{figure}
    \centering
\begin{verbatim}
(x[0] := 5;
 (x[1] := 1;
  (while (0 < x[0]) do
    (x[1] := (x[1] * x[0]);
     x[0] := (x[0] - 1)))))
\end{verbatim}
    \caption{Example of an IMP2 sentence computing $5!$ and storing the result in location $1$.}
    \label{fig:imp2exA}
\end{figure}

\begin{figure}
    \centering
\begin{verbatim}
(x[0] := readbit;
 (while (readbit = 1) do
   x[0] := (x[0] + 1)))
\end{verbatim}
    \caption{Example of IMP2 sentence counting consecutive $1$ bits from the input stream and storing the result in location $0$.}
    \label{fig:imp2exB}
\end{figure}

\subsection{Input/output convention}

A {\it valid} halting IMP2 program $\langle n, y \rangle$ is such that the machine halts after reading exactly all the bits from the input stream. For example, if the execution of the program $\langle n, 001 \rangle$ halts after invoking the \texttt{readbit} operation only twice, this means that $\langle n, 001 \rangle$ is not a valid IMP2 program, but $\langle n, 00 \rangle$ is.

The result of the execution of a halting program is determined by the state of the memory locations. To be useful for the purpose of approximating algorithmic complexity, the contents of the memory locations must be interpreted as a particular binary string. Although many interpretations could be put in place, the chosen one is the concatenation of the binary strings corresponding to the values in each of the memory locations of a program. The conversion of non-negative integers to binary strings is according to the previous convention and the concatenation of strings is performed in increasing order of the subscripts of the memory locations.
 
As an example, consider the sample program for computing $5!$ in figure~\ref{fig:imp2exA}. After the program halts the values stored in memory locations are $\texttt{x[0]} = 0$ and $\texttt{x[1]} = 120$, with other locations containing their initial values of $0$, which are mapped to the empty string $\epsilon$. Thus, since $\mathcal{B}_\texttt{x[0]} = \epsilon$ and $\mathcal{B}_\texttt{x[1]} = \texttt{111001}$, the resulting output string is $\epsilon\cdot\texttt{111001} = \texttt{111001}$.

Thus, we say that IMP2 produces the string $x$ as \emph{output}, with \emph{input} $\langle n, y \rangle$, and write $\text{IMP2}(\langle n, y \rangle) = P_n (y) = x$, if the segment of the input scanned at the moment of halting is exactly the string $\langle n, y \rangle$, and $x=\mathcal{B}_\texttt{x[0]} \cdot \mathcal{B}_\texttt{x[1]}\dots$.

\subsection{Prefix-free optimality}

Prefix-free machines, or simply prefix machines, are also known in the literature as \emph{self-delimiting} machines~\cite{Chaitin2001,IKCA}. Under our convention for inputs and outputs, it is not difficult to see that IMP2 is a universal prefix machine. Notice that the two-part code $\langle n, y \rangle$ of an IMP2 program uses the previously defined self-delimiting encoding $\bar{n}$. Moreover, IMP2 is forced to accept an input program on its own, without using an end-of-string indication for when to stop scanning the input tape. This ensures that the set of strings accepted, i.e. the set of halting programs, is prefix-free.

\begin{proposition}\label{prop:optimal}
The universal machine $\text{IMP2}$ is additively optimal for the class of prefix machines.
\end{proposition}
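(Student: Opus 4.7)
The plan is to exploit the Turing completeness of IMP2 in order to simulate any prefix machine with only a constant overhead in program length. Given any prefix machine $U'$, I would construct a fixed IMP2 sentence $S_{U'}$ that acts as an interpreter for $U'$, and then show that for any $U'$-program $p$ with $U'(p)=x$, the IMP2 program $\langle n_{U'}, p\rangle$, where $n_{U'}$ is the index of $S_{U'}$ in the sentence enumeration referenced in \ref{sec:enum}, outputs $x$ under the IMP2 semantics.

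The construction of $S_{U'}$ would proceed as follows. Since IMP2 is Turing complete, there is an IMP2 sentence that, when given a binary stream on its input tape, simulates $U'$ step by step: each time $U'$ requests a bit from its own input, $S_{U'}$ invokes \texttt{readbit}; when $U'$ halts, $S_{U'}$ stops invoking \texttt{readbit} and writes the output $x$ into memory in a way that the IMP2 output convention reproduces $x$. A convenient arrangement is to store, in \texttt{x[0]}, the unique nonnegative integer whose length-increasing lexicographic binary encoding is $x$, leaving all other memory locations at their default value $0$ (which is mapped to $\epsilon$ by the output convention); the concatenation $\mathcal{B}_{\texttt{x[0]}}\cdot\mathcal{B}_{\texttt{x[1]}}\cdots$ is then exactly $x$.

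With $S_{U'}$ in hand, the input/output convention gives $\text{IMP2}(\langle n_{U'}, p\rangle) = S_{U'}(p) = U'(p) = x$. Validity of this program in the sense of the previous subsection follows from $U'$ being prefix-free: $U'$ halts on $p$ after reading exactly the $|p|$ bits of $p$, so $S_{U'}$ invokes \texttt{readbit} exactly $|p|$ times and consumes the entire input stream $y=p$, with no excess bits left unread. Since $|\langle n_{U'}, p\rangle| = |\bar{n_{U'}}| + |p|$ and $|\bar{n_{U'}}|$ is a constant $c_{U'}$ depending only on $U'$, taking minima over all $U'$-programs for $x$ yields $K_{\text{IMP2}}(x) \le K_{U'}(x) + c_{U'}$ for every $x$, which is precisely the additive optimality required by \eqref{eq:kolmogorov-def}.

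The principal obstacle is the construction of the simulator $S_{U'}$ itself, and in particular verifying that it consumes its input exactly as $U'$ does. This hinges on being able to internalise an arbitrary prefix machine's transition function inside the imperative fragment of IMP2 while respecting IMP2's restriction that the input tape is read strictly in order via \texttt{readbit} and never revisited. Turing completeness supplies the existence of such a sentence, but the careful book-keeping needed to align $U'$'s consumption of bits with IMP2's \texttt{readbit} discipline, so that the halting-and-fully-consumed condition transfers cleanly, is the only point where the argument requires genuine attention rather than routine rewriting.
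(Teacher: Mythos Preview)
Your proposal is correct and follows essentially the same approach as the paper: invoke universality of IMP2 to obtain a fixed sentence simulating the given prefix machine, then bound $K_{\text{IMP2}}(x)$ by $K_{U'}(x)$ plus the length of the self-delimiting index $\bar{n}$. The paper's proof is terser, simply citing universality and computing the constant explicitly as $2|\mathcal{B}_n|+1$, whereas you spell out the simulator's use of \texttt{readbit}, the output-encoding via \texttt{x[0]}, and the transfer of the fully-consumed-input condition from $U'$ to IMP2; these elaborations are sound and address exactly the points the paper leaves implicit.
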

\begin{proof}
Consider an arbitrary prefix machine $G$ over a binary input alphabet. Then, since IMP2 is a universal machine, there is an IMP2 sentence $P_n$, such that, for any binary string $y$:
\[\text{IMP2}(\langle n,y \rangle) = P_n (y) = G(y)\]
Hence, considering that $\bar{n} = \mathtt{1}^{|\mathcal{B}_n|} \cdot\mathtt{0}\cdot \mathcal{B}_n$ and $|\bar{n}|=2(|\mathcal{B}_n|)+1$, it follows from the definition of the algorithmic complexities $K_{\text{IMP2}}$ and $K_G$ that:
\[K_{\text{IMP2}}(x)  \leq K_G(x) + 2(|\mathcal{B}_n|) + 1,\] 
where $2(|\mathcal{B}_n|) + 1$ is a constant depending only on IMP2 and $G$. Therefore, it follows that IMP2 is an additively optimal machine for the class of prefix machines.
\end{proof}

\subsection{Enumeration of IMP2 sentences}
\label{sec:enum}

Sentences of the IMP2 language are enumerated based on the structure of syntactically valid abstract syntax trees (AST).  Using standard methods for enumerating combinatorial structures~\cite{1994FlajoletCombinatorialStructures,TAOCP4a} we implemented a system for constructing enumerators based on combinators~\cite{LEMUS202231,2017NewFetscherFindlerMccarthyFairEnumerations} that allow us to build and compose complex enumerations from simpler ones.  The enumerations of numbers, truth values and other atomic elements are combined to produce enumerations of arithmetic and boolean expressions, which are themselves combined to produce enumerations of assignments and control structures all the way up to the top level sentences specified by the language grammar.

The bijection between sentences and the natural numbers is encoded with a pair of functions. The rank function produces the unique position of a given AST, and the unrank function is the inverse, producing a unique AST given its position.

The most important method of combination consists of getting the product of two or more enumerations, where the positions are interweaved using Cantor's pairing function in the binary case and a generalisation for $n$-tuples of natural numbers which satisfies the
fairness property~\cite{2017NewFetscherFindlerMccarthyFairEnumerations}.

When combining enumerations for a particular syntactic category such as arithmetic expressions, boolean expressions or IMP2 sentences, the positions are partitioned in an alternating fashion between the production rules.

Considering the program for calculating $5!$ of Figure~\ref{fig:imp2exA} and the enumeration of IMP2 sentences, the assignment $\texttt{x[0] := 5}$ is mapped to position $1405$, the more complex assignment $\texttt{x[1] := (x[1] * x[0])}$ is mapped to position $142049$, the while loop in the example is mapped to position $17972673899864641600766$, and the whole example sentence is mapped to a position whose decimal representation contains 90 digits.

\subsection{Enumerating IMP2 programs by length}

The enumeration of programs consisting of a sentence and its input stream must take into account the number of bits in the corresponding two-part code $\langle n,y \rangle$ in order to produce a length-increasing order.

For a given code length $m$, we consider all possible combinations of $k$-bit self-delimiting strings $\bar{n}$ and $(m-k)$-bit binary strings $y$ for $0 \leq k \leq m$.  This can be easily accomplished given the simple structure of the encoding, yielding $2^{m-k}$ binary strings for the input stream and $2^{(k-1)/2}$ sentences when $k$ is an odd number and zero programs otherwise.

The subset of programs of a particular length can be partitioned and enumerated independently, allowing for a distributed enumeration and execution of programs.

\section{Results}\label{sec:results}

Our preliminary experiment was performed by executing all IMP2 programs of at most length $40$, denoted $\text{IMP2}_{40}$. Running this experiment took $34$ minutes total for estimating the halting threshold and then $51.7$~hours for the enumeration and execution of $\text{IMP2}_{40}$. We used a desktop computer dedicating $30$~threads of an AMD Ryzen 9 CPU running at $4.5$~GHz with $6$~GiB of memory each.

We recorded the length of the smallest program found for every output string $x$, denoted as $\text{SPF}_{40}(x)$. Additionally, for approximating the complexity of strings via CTM, we recorded the relative frequency of programs producing every output string $x$ with respect to the total number of halting programs, denoted $D(\text{IMP2}_{40})(x)$.  The full dataset containing the complexity estimations can be found in~\url{https://kolm-complexity.gitlab.io/optimal-interpreter/}.

For putting in perspective the scale of our experiment we use as references the empirical frequency distributions $D(4,2)$ and $D(5,2)$, as published in~\cite{CTMShortStrings,soler2014calculating}. Table~\ref{tab:experiment-summary} shows a summary contrasting some key aspects of the experiments. In the table, the entry corresponding to `complete output length' records the length up to which all strings of that length and below were produced and accounted for in the distributions.
\begin{table}
    \caption{Summary of enumerating and executing programs.}\label{tab:experiment-summary}
    \centering
    \begin{tabular}{|r|l|l|l|}
    \hline
                        & $\text{IMP2}_{40}$ & $(4, 2)$ & $(5, 2)$ \\
    \hline
       Total programs   & $2199020109825$ & $11019960576$ & $26559922791424$ \\
      Strings produced  & $145$      & $1832$      & $99608$ \\
 Largest output length  & $7$        & $16$        & $49$    \\
Complete output length  & $6$        & $8$         & $12$    \\
      \hline
    \end{tabular}
\end{table}

Table~\ref{tab:halt-status} shows the counts and percentages of programs in $\text{IMP2}_{40}$ according to their halting status. The vast majority of programs are considered non-halting. However, most of them did run up to termination but failed to read all the bits in the input stream, which means they are extensions of a previously executed program corresponding to the same IMP2 sentence but where a smaller input stream, i.e. a prefix, was accounted for already.
\begin{table}
    \caption{Halting status for $\text{IMP2}_{40}$}\label{tab:halt-status}
    \centering
    \begin{tabular}{|l|l|l|}
    \hline
    Status & Count & Percentage \\
    \hline
    Halted & $3828109$ & $0.0000017\%$ \\
    Threshold surpassed & $887841$ & $0.000040\%$ \\
    Loop reached & $392291668910$ & $17.839385\%$ \\
    Read failure & $45125277$ & $0.002052\%$ \\
    Extensions & $1806678599688$ & $82.158348\%$ \\
    \hline
    \end{tabular}
\end{table}

Although the scale of our experiment is very modest, the analysis of the data that we were able to collect allows us to make specific observations in relation to previously suggested properties of CTM complexity approximations.

\subsection{On the convergence towards a `natural' distribution}

In other CTM-related studies, empirical evidence has been reported regarding the correspondence between the CTM complexity rankings produced by different models of computation~\cite{CTLB2019,ZenilRespuesta,Zenil_OnTheAlg,Zenil_CompressionValidation}, alluding to an underlying natural distribution of binary strings. In particular, the Spearman correlation coefficient has been employed for assessing the correlation between the frequency distributions of binary strings generated by Turing machines, one-dimensional cellular automata and Post tag systems~\cite{Zenil_OnTheAlg}, suggesting a sort of `natural behaviour', defined in~\cite{CTLB2019} as ``behaviour that is not artificially introduced with the purpose of producing a different-looking initial distribution''. Moreover it has been suggested the hypothesis that most `natural' models produce similar output frequency distributions~\cite{ZenilRespuesta}, which means some kind of empirical invariability in the complexity estimations by CTM even without a guarantee of optimality.

To the effect of testing such hypothesis, we believe the IMP2 model to be a suitable choice for a reference machine, given its neutrality, in the sense that, while representing a significant departure from the previously chosen `natural' models, is still reasonable and even optimal. In other words, although the major differences between IMP2 and these other models giving raise to such conjecture suggest that IMP2 is unlikely to be biased towards producing a similar output frequency distribution, IMP2 is also not designed in an artificial way with the purpose of producing a different-looking initial distribution. Therefore, we believe this neutral model to present a valuable opportunity to test the hypothesis regarding the convergence towards a `natural' distribution upon application of CTM.

In our study, the Spearman rank correlation coefficient was also employed to contrast the frequency distribution $D(\text{IMP2}_{40})$, with the distributions $D(4,2)$ and $D(5,2)$ as published in~\cite{CTMShortStrings}. For assessing the statistical significance of the correlation values we performed a permutation test and considered $p$-values in the ranges $[0,0.001)$, $[0.001,0.01)$ and $[0.01,0.1)$ to represent very high, high and low significance, respectively, and $0.1$ or larger to represent very low significance.

The results from this study reveal a striking difference when analyzing the data at different resolutions. This is, globally, when considering all CTM values computed, and locally, by zooming in on the values for strings of fixed lengths. Upon examining the whole dataset, we found a global correlation of $0.850543$ with very high significance between $\text{IMP2}_{40}$ and both $(4,2)$ and $(5,2)$~\cite{CTMShortStrings, soler2014calculating}. Figure~\ref{fig:ctm-imp2-d4} shows the distribution of the approximated CTM values of each string between $\text{IMP2}_{40}$ and $(4,2)$. However, when looking at the local Spearman coefficients between $\text{IMP2}_{40}$ and both $(4,2)$ and $(5,2)$, we found no correlation, regardless of the length of the strings considered. Altogether, these observations indicate that the reason behind the global high correlation observed is a sheered commonality between the three distributions that has no impact in the local correlation analysis for the strings of a fixed length. Such commonality is that in all three distributions most strings of a given length are less frequent (more complex) that all strings of a smaller length. 

\begin{figure}
\centering
\includegraphics[width=0.7\textwidth]{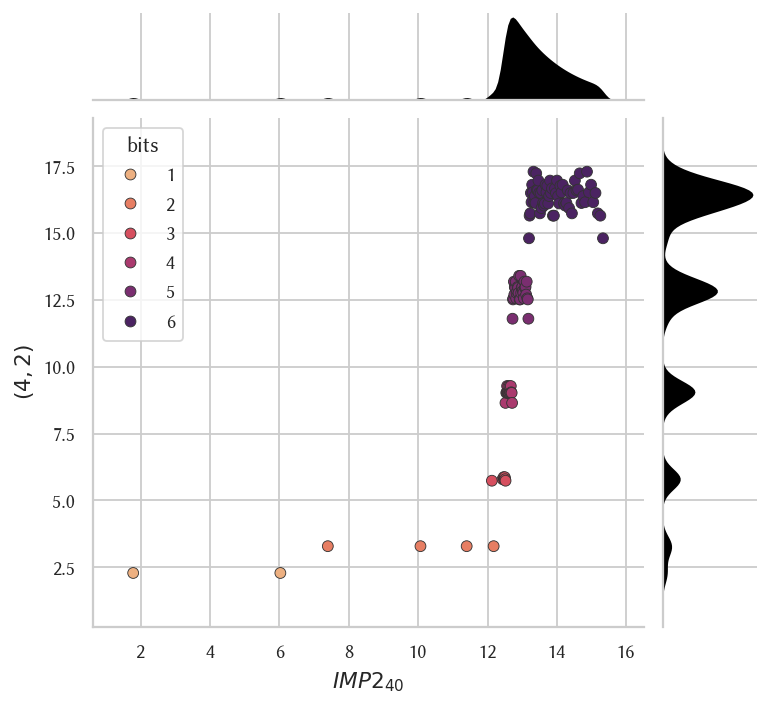}
\caption{%
  CTM approximation comparison between $\text{IMP2}_{40}$ and $(4,2)$.} \label{fig:ctm-imp2-d4}
\end{figure}

While both global and local correlation perspectives provide insights into the compatibility of CTM approximations between reference machines, the discrepancies found suggest that the IMP2 model, however Turing universal and optimal, might not conform to the natural behaviour defined recently in~\cite{DelahayeTowardsStableDefinition} given its peculiarities, including the fact that the model does not display conformance to basic linear transformations (binary negation, for example), since the output distribution produced does not converge to previously reported `natural' distributions and may converge slower to Levin's universal distribution.

\subsection{Validation of CTM by SPF}

Since our methods for computing the approximations enumerate and execute programs by length in increasing order, it is equally straightforward to count the number of programs producing a string for CTM, as it is to record the length of the smallest program found that produces a string for SPF. Furthermore, the prefix-free encoding of our IMP2 model yields programs at almost every length, in contrast to previously used models such as Turing machines with $n$ states for which programs are all encoded as strings with $\lceil\log_2{\left((4n+2)^{2n}\right)}\rceil$ bits~\cite{MAAC}, allowing us to run programs of a larger range of lengths. Compared to CTM, the SPF approximation resembles more the original definition of Kolmogorov complexity of equation~\ref{eq:kolmogorov-def} and therefore can be considered a suitable approximation as it is a proper upper bound for the actual prefix complexity, and thus an alternative measure suitable for validating the corresponding CTM approximation.

When contrasting the algorithmic complexity approximated using CTM and SPF under IMP2 (see Figure~\ref{fig:plot2}) we get a Spearman correlation coefficient of $0.986114$ and a Pearson correlation coefficient of $0.911119$ both of very high significance. We also found high and significant correlation coefficients between both measures when looking at strings of a particular length up to 6 bits (see Figure~\ref{fig:ctm-spf-6}). In other words, we found that both measures are highly correlated from a global and local perspectives. Table~\ref{tab:correlations} details all Spearman correlation coefficients and $p$-values between the frequency distributions produced by the different aforementioned reference machines and also between CTM and SPF using IMP2.

\begin{figure}
    \centering
    \includegraphics[width=0.7\textwidth]{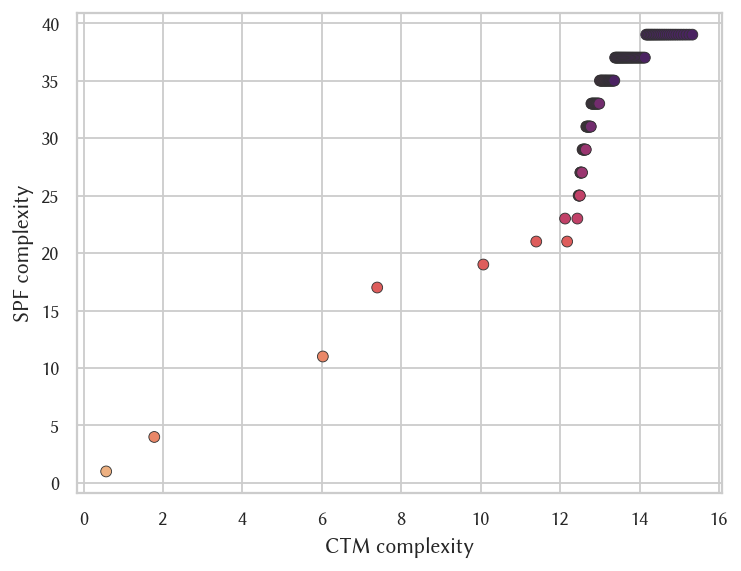}
    \caption{Complexity estimations comparison for $\text{IMP2}_{40}$ for all binary strings of length $6$ and below.}
    \label{fig:plot2}
\end{figure}

\begin{figure}
    \centering
    \includegraphics[width=0.7\textwidth]{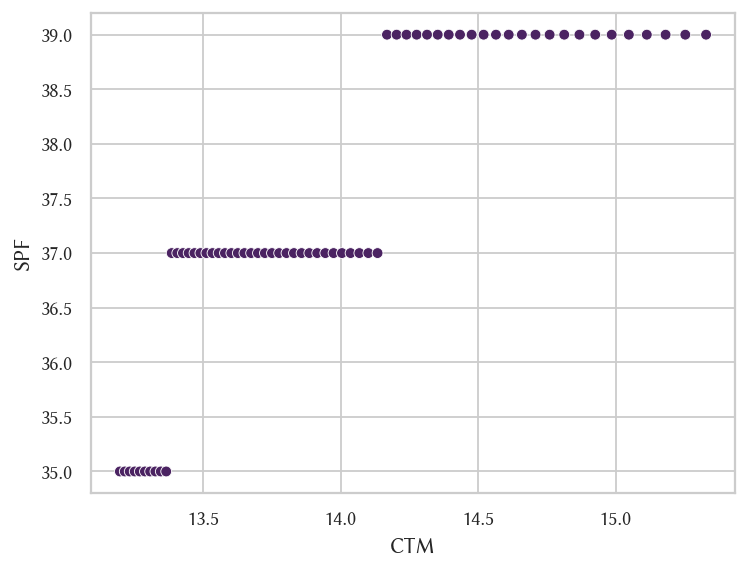}
    \caption{Complexity approximation comparison for $\text{IMP2}_{40}$ for all $6$-bit strings.}
    \label{fig:ctm-spf-6}
\end{figure}

\begin{table}
    \caption{Spearman correlation coefficients of the frequency of output strings between reference machines (first two columns) and complexity approximations with IMP2. The numbers in parenthesis indicate the correlation significance computed with a permutation test of $20000$ samples given the null hypothesis of a random ranking yielding a higher correlation.}\label{tab:correlations}
    \centering
    \begin{tabular}{|l|l|l|l|}
    \hline
    Output length & 
    IMP2 vs $(4,2)$ & 
    IMP2 vs $(5,2)$ & 
    CTM vs SPF \\
    \hline
    3 & 
    $0.0$ ($0.55622$) & 
    $0.0$ ($0.56147$) & 
    $0.92582$ ($0.00264$) \\
    4 & 
    $0.0$ ($0.50062$) & 
    $0.0$ ($0.49927$) & 
    $0.91715$ ($4.99975\mathrm{e}{-5}$) \\
    5 & 
    $0.0$ ($0.49837$) & 
    $0.0$ ($0.50162$) & 
    $0.91721$ ($4.99975\mathrm{e}{-5}$) \\
    6 & 
    $0.0$ ($0.49832$) & 
    $0.0$ ($0.49667$) & 
    $0.91687$ ($4.99975\mathrm{e}{-5}$) \\
    \hline
    $\leq 6$ & 
    $0.85052$ ($4.99975\mathrm{e}{-5}$) & 
    $0.85052$ ($4.99975\mathrm{e}{-5}$) & 
    $0.98577$ ($4.99975\mathrm{e}{-5}$) \\
    \hline
    \end{tabular}
\end{table}

Although we can not confidently say this correlation is not dependent in some way from the chosen reference model, we have no reason to believe that our model is in any way artificially biased to that effect. Therefore, such correlation result comes as a positive surprise since it is beyond the scope of theoretical expectations, because of the way that the asymptotic term from the Coding Theorem is handled by CTM. This supports the claim that CTM is suitable for comparing the relative complexities between output strings. Additionally, our results show that, in most cases, CTM allows us to set apart strings we are not able to distinguish by its SPF value, thus providing a finer-grade approximation. Moreover, a related previous report~\cite{Zenil_Correspondence}, upon execution of a large set of Turing machines with up to 5 states (and 2 symbols), showed that the CTM approximation of the complexity of a string is in agreement with the number of instructions used by the machines producing the string. These considerations give us confidence in using CTM as a valid substitute for the more naive approach.

\section{Concluding remarks}

In this paper we presented a universal optimal model of computation with a simple high-level imperative programming language called IMP2. This model served as a reference machine for approximating Kolmogorov prefix complexity by means of the Coding Theorem Method (CTM). In contrast to previous studies that employed low-level models as reference machines whose optimality is merely conjectured, our IMP2 model was developed with theoretical constraints in mind. This approach enabled us to test previously published hypotheses regarding empirical approximations of complexity using a reasonably constructed reference machine.

A first hypothesis tested is if CTM approximations would yield similar numerical results when computed with different reference machines, indicating that the approximation might be invariant with respect to the model of computation chosen, when `natural'. The observed output frequencies computed using IMP2 are strongly aligned with the length-increasing lexicographic ordering of binary strings, and while the frequencies obtained with other models have a somewhat similar tendency to simplicity bias, for fixed length strings, shows little to no correlation. These results suggest that there may be models that are more `natural' than others even when equally optimal or not~\cite{DelahayeTowardsStableDefinition}. This means that IMP2 would require a substantially larger space of programs to be executed in order to observe such a convergence. It also remains open if there may be choices we made for the design of IMP2 that inadvertently makes it less `natural' and unsuitable for approximating Kolmogorov complexity under CTM (and other means). There needs to be further investigations into the sensitivity of CTM to changes in the choice of prefix-free encoding, enumeration, and the input/output convention, as well as with other reasonable high-level models of computation.

The second hypothesis poses that CTM is a valid approach for approximating Kolmogorov complexity. To verify this, we compared CTM with SPF, a direct application of the Kolmogorov prefix complexity definition. Our results demonstrate that both approximations are monotonically correlated, meaning there is an agreement in the order of assigned complexity values. This indicates that our empirical CTM complexity approximation is stable and meaningful with respect to a fixed reference machine. In addition, there is a practical advantage in choosing CTM, as it allows for comparing complexities with a finer-grade resolution than SPF. Both approximation methods can be further studied to estimate a lower bound for the constant in the coding theorem that is discarded by CTM.

One of the main challenges we faced in our endeavour is the overwhelming proportion of non-halting programs encountered while executing the experiments with IMP2, resulting in a significant amount of executions performed that do not contribute towards the calculation of either CTM or SPF. As shown in Table~\ref{tab:halt-status}, most executions are considered non-halting, and among these, extension programs constitute the majority. It is not straightforward to preemptively omit these subset of programs from the enumeration or efficiently detect them before running them. This disadvantage is not inherent to the approximation method, as it doesn't present itself with other reference machines, but it is mainly due to the prefix-free code nature of IMP2 programs, which is an essential component for our optimality proof. We posit there is a trade-off between practicality and optimality when empirically approximating complexity with the present approach.

\printbibliography

\end{document}